\documentclass[reqno,a4,11pt]{amsart}
\usepackage{amsmath,amssymb,tabularx,setspace}
\usepackage{color,graphics,lscape,rotating}
\usepackage{pdflscape}
\usepackage{adjustbox}
\usepackage{multirow}
\usepackage{tabularx}
\newtheorem{theorem}{Theorem}

\newtheorem{definition}{Definition}
\newtheorem{remark}{Remark}
\newtheorem{corollary}{Corollary}
\makeatletter
\renewcommand\section{\@startsection {section}{1}{\z@}%
                                   {-3.5ex \@plus -1ex \@minus -.2ex}%
                                   {2.3ex \@plus.2ex}%
                                   {\normalfont\large\bfseries}}
\makeatother

\begin{document}
\doublespace
\vspace{-0.3in}
\title[]{ Jackknife empirical likelihood ratio test for log symmetric distribution using probability weighted moments }
\author[]
{   A\lowercase{njana}  S\lowercase{\textsuperscript{a} and} S\lowercase{udheesh} K. K\lowercase{attumannil,\textsuperscript{b}  }
\\
 \lowercase{\textsuperscript{a}}U\lowercase{niversity of} H\lowercase{yderabad } H\lowercase{yderabad}, I\lowercase{ndia},\\ \lowercase{\textsuperscript{b}}I\lowercase{ndian} S\lowercase{tatistical} I\lowercase{nstitute},
  C\lowercase{hennai}, I\lowercase{ndia}.}
\maketitle
\vspace{-0.2in}
\maketitle

\begin{abstract} Log symmetric distributions are useful in modeling data which show high skewness and have found applications in various fields. Using a recent characterization for log symmetric distributions, we propose a goodness of fit test for testing log symmetry. The asymptotic distributions of the test statistics under both null and alternate distributions are obtained.  As the normal-based test is difficult to implement, we also propose a jackknife empirical likelihood (JEL) ratio test for testing log symmetry. We conduct a Monte Carlo Simulation to evaluate the performance of the JEL ratio test. Finally, we illustrated our methodology using different data sets.    \\
 Keyword: Empirical likelihood; Log symmetric distribution distribution; Probability of weighted moments; U-statistics.

\end{abstract}

\section{Introduction}\label{sec1}
The family of log-symmetric distributions includes several specific types of distributions commonly used to model continuous,  positive, and asymmetric data. It is flexible enough to handle bimodal data, as well as distributions with light or heavy tails. Log-symmetric distributions are especially useful for dealing with data that show significant skewness.  This family includes several well-known distributions, such as the log-logistic, log-Laplace, log-Cauchy, log-power-exponential, log-student-t,  and Birnbaum-Saunders distributions. Some log-symmetric distributions have heavier tails than the log-normal distribution, making them robust for estimating model parameters even in the presence of extreme or outlying observations.  This class of distribution is extensively studied by Vanegas and Paula (2016), Ferrari and Fumes (2017) and Ahmadi and Balakrishnan (2024).

Let $X$ be a continuous positive random variable with distribution function $F$. Then the distribution is said to possess log-symmetry about $\theta$ if   $\frac{X}{\theta}$ and $\frac{\theta}{X}$ have same distribution for some $\theta>0$.  This property is referred to as log symmetry as it is equivalent to ordinary symmetry of the distribution of $\log(X)$ about $\log(\theta)$ (Marshall and Olkin  (2007), Seshadri  (1965)).  Thus we have the following definition.
\begin{definition}\label{def1}
  Let $X$ be a continuous positive random variable with the property that $X/\theta$ and $\theta/X$ are identically distributed then $X$ has a log symmetric distribution about the point $\theta$.
  \end{definition}

 Several studies in the literature have explored the class of log-symmetric distributions and their properties. However, to the best of our knowledge, no testing procedure exists to test log-symmetry. This highlights the need to develop a test for log-symmetry. If the data is found to exhibit the log-symmetric property, subsequent tests can be done to determine which specific distribution within this class best describes the data. This motivates us to develop a test for log symmetric distributions. This paper aims to propose a nonparametric test for testing log-symmetry. We used a recent characterization of log-symmetric distributions established by  Ahmadi and Balakrishnan (2024) to develop the test.

 Probability weighted moments (PWMs) are less affected by outliers than traditional moments, making them more reliable for handling skewed and heavy-tailed distributions. Hence, in the case of log-symmetric distributions with heavy tails, PWMs can be used to estimate the parameters of the distribution.
 This motivates us to use PWMs in developing the proposed test.  For some recent developments on PWMs, see Vexler et al. (2017)  and Deepesh et al. (2021) and  Sudheesh et al. (2024) and the references therein.

 The rest of the paper is organized as follows. In Section 2, we develop a test for testing log symmetry and  derive the asymptotic distributions of the test statistics under both null and alternate distributions are obtained. Also, we propose a jackknife empirical likelihood (JEL) ratio test for testing log symmetry.  Monte Carlo Simulations are carried out to evaluate the performance of the JEL ratio test and the results are presented in Section 3. Numerical illustrations of the proposed test using different data sets are conducted and are presented in Section 4. Concluding remarks are given in Section 5.

\section{Test statistic}\label{sec2}
Let $X$ be a non-negative continuous random variable having distribution function $F(x)$. Here, we develop a nonparametric test for testing log-symmetry. As mentioned in the introduction, we propose a test based on a recent characterization of log-symmetric distributions by Ahmadi and Balakrishnan (2024).

Hence we consider the following characterization result given in Theorem 3.2 of Ahmadi and Balakrishnan (2024).
\begin{theorem}
 Let $X$ be a positive continuous random variable having distribution function $F$,  and $g$ is a positive real-valued strictly monotonic continuous function defined on the support of $F$. Assume $E(g(X))$ and $E(g(1/X))$ both exist. $X$ has a log-symmetric distribution around 1 if and only if
 $$E(g(X))-E(g(1/X))=0.$$
\end{theorem}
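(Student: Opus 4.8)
The plan is to prove both directions of the characterization directly from Definition~\ref{def1} with $\theta = 1$, using the change of variables $Y = 1/X$.

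\medskip

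\noindent\textbf{Necessity ($\Rightarrow$).} Suppose $X$ is log-symmetric about $1$, so that $X$ and $1/X$ are identically distributed. The plan is to observe that for \emph{any} measurable function $h$ for which the relevant expectation exists, $E(h(X)) = E(h(1/X))$, since $X \stackrel{d}{=} 1/X$. First I would apply this with $h = g$, giving $E(g(X)) = E(g(1/X))$, and then rearrange to obtain $E(g(X)) - E(g(1/X)) = 0$. This direction is immediate and requires only the existence of $E(g(X))$ and $E(g(1/X))$, which is assumed in the hypothesis.

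\medskip

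\noindent\textbf{Sufficiency ($\Leftarrow$).} This is the substantive direction and the main obstacle. The difficulty is that the hypothesis $E(g(X)) - E(g(1/X)) = 0$ is a single scalar equation for one fixed function $g$, whereas the conclusion $X \stackrel{d}{=} 1/X$ is an equality of two full distributions. A lone moment identity cannot in general force distributional equality, so the proof must exploit the strict monotonicity and continuity of $g$ together with a well-chosen representation of the two expectations. The plan is to write both expectations as integrals against the distribution of $X$ and compare the integrands pointwise. Setting $Y = 1/X$ with distribution function $F_Y$, one has
\begin{equation}
E(g(X)) - E(g(1/X)) = \int_0^\infty g(x)\, dF(x) - \int_0^\infty g(x)\, dF_Y(x) = \int_0^\infty g(x)\, d\bigl(F(x) - F_Y(x)\bigr).
\end{equation}
The idea is to integrate by parts so that the identity reads $\int_0^\infty \bigl(F_Y(x) - F(x)\bigr)\, dg(x) = 0$, where $dg$ is a signed measure that, by strict monotonicity of $g$, has constant sign and full support on the support of $F$. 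Because the integrator $dg$ does not change sign, the vanishing of this integral forces the continuous integrand $F_Y(x) - F(x)$ to vanish identically on the support, giving $F = F_Y$ and hence $X \stackrel{d}{=} 1/X$, which is exactly log-symmetry about $1$.

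\medskip

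The step I expect to be delicate is making the sign argument rigorous: one must check that the boundary terms in the integration by parts vanish (using the existence of $E(g(X))$ and $E(g(1/X))$ and the monotonicity of $g$), and one must argue carefully that the single scalar constraint, combined with the fixed sign of $dg$ and continuity of $F - F_Y$, genuinely forces $F \equiv F_Y$ rather than merely constraining an average. It is likely that the original characterization in Ahmadi and Balakrishnan (2024) is stated so that the identity holds for a suitable class of functions $g$, or that $g$ is essentially arbitrary within the monotone continuous class, in which case sufficiency follows by letting $g$ range over a distribution-determining family. I would therefore first confirm the exact quantifier on $g$ in the source statement, since whether $g$ is fixed or free is precisely what makes the sufficiency argument either a genuine sign-of-measure argument or a routine appeal to a determining class.
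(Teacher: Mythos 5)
The first thing to note is that the paper itself contains no proof of this statement: it is quoted verbatim as Theorem 3.2 of Ahmadi and Balakrishnan (2024), so there is no internal argument to compare yours against and your attempt must be judged on its own merits. Your necessity direction is correct and complete.

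The gap is in sufficiency, and it is exactly the one you flag in your closing sentences: the sign-of-measure argument does not work for a single fixed $g$. After integrating by parts you are left with $\int_0^\infty \bigl(F_Y(x)-F(x)\bigr)\,dg(x)=0$ where $dg$ is a measure of constant sign, but the integrand $F_Y-F$ is a difference of two distribution functions and has no reason to keep a constant sign; a sign-changing continuous function can integrate to zero against a positive measure, so nothing forces $F\equiv F_Y$. In fact, for a single fixed $g$ the sufficiency direction is false: setting $Y=\log X$ and $\phi(y)=g(e^{y})-g(e^{-y})$, the hypothesis reduces to the single scalar equation $E[\phi(Y)]=0$ with $\phi$ odd and strictly increasing, and one can construct continuous, non-symmetric laws for $Y$ satisfying it (e.g., two narrow bumps at $a>0$ and $-b<0$ with $a\neq b$ and weights chosen so that $w\,\phi(a)=(1-w)\,\phi(b)$). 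The positivity of $g$ does not rescue this. The theorem is therefore only tenable when read with the quantifier ``for every such $g$'' (or in the order-statistics form in which Ahmadi and Balakrishnan actually state their characterizations), and in that reading sufficiency is the routine determining-class argument you mention at the end: bounded strictly increasing continuous functions approximate indicators $I(x\le c)$, yielding $F(c)=F_Y(c)$ for all $c$. Your instinct to check the exact quantifier on $g$ was the right one; the fixed-$g$ sign argument you offer as the main route is not salvageable.
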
Since we are focused on developing the test based on probability weighted moments (PWMs), we first give the definition of PWMs.
\begin{definition}\label{pwm}
    For a random variable $X$ with cumulative distribution function  $F(x)$, the probability weighted moments are defined as:
$$\beta_r=E[X(F(X))^r],\,r>0.$$
\end{definition}
Given Definition \ref{pwm}, we choose $g(x)=xF^{\beta}(x)$ to obtain a characterization of log symmetric distribution using PWMs. Thus we have the following result.
\begin{corollary}\label{cor1}
     Let $X$ be a positive continuous random variable having distribution function $F$. $X$ has a log-symmetric distribution around 1 if and only if
 \begin{equation}\label{charect}
     E\left(XF^{\beta}(X)\right)-E\left(\frac{1}{X}F^{\beta}(\frac{1}{X})\right)=0.
 \end{equation}
\end{corollary}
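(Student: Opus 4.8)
The plan is to obtain Corollary \ref{cor1} as a direct specialization of the characterization theorem quoted above, taking $g(x)=xF^{\beta}(x)$. The only real work is to check that this particular $g$ meets the three hypotheses of that theorem — positivity, continuity, and strict monotonicity on the support of $F$ — and that the two relevant expectations are finite; once this is done, the equivalence in \eqref{charect} is immediate.

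First I would verify positivity and continuity. On the support of $F$ one has $x>0$ and $0<F(x)<1$, so $F^{\beta}(x)>0$ and hence $g(x)=xF^{\beta}(x)>0$. Since $X$ is continuous, $F$ is continuous, and therefore $g$, being a product of the continuous maps $x\mapsto x$ and $x\mapsto F^{\beta}(x)$, is continuous on the support. Next I would establish strict monotonicity: because $F$ is continuous and strictly increasing on the support, both factors $x\mapsto x$ and $x\mapsto F^{\beta}(x)$ are positive and strictly increasing there, so their product $g$ is strictly increasing, hence strictly monotonic as the theorem requires.

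With $g$ so chosen, note that $E(g(X))=E\bigl(XF^{\beta}(X)\bigr)$ and $E(g(1/X))=E\bigl(\tfrac{1}{X}F^{\beta}(\tfrac{1}{X})\bigr)$, which are exactly the two terms appearing in \eqref{charect}. Applying the characterization theorem with this $g$ then gives that $X$ is log-symmetric about $1$ if and only if $E(g(X))-E(g(1/X))=0$, which is precisely \eqref{charect}, completing the argument.

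The step requiring the most care is the strict monotonicity of $g$: one must invoke that $F$ is strictly increasing on the support — a consequence of continuity together with the restriction to the support — rather than merely non-decreasing, since the theorem demands a \emph{strictly} monotonic $g$. The existence of the expectations $\beta_{\beta}=E\bigl(XF^{\beta}(X)\bigr)$ and $E\bigl(X^{-1}F^{\beta}(X^{-1})\bigr)$ should also be noted so that the integrability hypothesis of the theorem is explicitly satisfied; beyond these verifications the result is an immediate corollary.
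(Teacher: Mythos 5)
Your proposal is correct and follows exactly the route the paper intends: the paper states Corollary \ref{cor1} as an immediate specialization of the quoted characterization theorem with $g(x)=xF^{\beta}(x)$, offering no further argument, and your verification of positivity, continuity, strict monotonicity, and integrability simply makes explicit the hypothesis-checking the paper leaves implicit. (A small simplification: you do not need $F$ itself to be strictly increasing on the support — the product of the positive strictly increasing map $x\mapsto x$ with the positive non-decreasing map $x\mapsto F^{\beta}(x)$ is already strictly increasing.)
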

\subsection{Test based on U-statistics}
 Let  $\mathcal{F}$ be the class of distribution having the property of log symmetry given in Definition 1. Without loss of generality, we assume $\theta=1$. The proposed test can be modified for the general $\theta$. See Remark 1 for more details.

 Consider a   random sample $X_{1}, ...,X_{n}$  from $ {F}$. Based on this sample,  we are interested in testing the null hypothesis
\begin{equation*}
    \mathcal{H}_0 : F\in \mathcal{F}
\end{equation*}
against the alternative hypothesis
\begin{equation*}
    \mathcal{H}_1 : F \notin \mathcal{F}.
\end{equation*}
To test the above hypothesis, we first define a departure measure $\Delta(F)$ which discriminates between the null and alternative hypothesis. In view of Corollary \ref{cor1} we define the departure measure as
\begin{eqnarray}\label{delta}
 \Delta(F)&=&\left(E(XF^{\beta}(X))-E(X^{-1}F^{\beta}(\frac{1}{X}))\right).
\end{eqnarray}Since we are interested in developing a test  based on U-statistics, we express the departure measure defined in (\ref{delta}) as an expectation of functions of random variables.  In this process, note that the distribution function  of $\max(X_1,\ldots,X_{\beta})$ is given by $F^{\beta}(x)$. Let $\bar{F}(x)=1-F(x)$ be the survival function of $X$ at the point $x$. Thus the survival function of $\min(X_1,\ldots,X_{\beta})$ becomes $\bar{F}^{\beta}(x)$.
Now, consider
\begin{eqnarray}
 \Delta(F)&=& E\left(XF^{\beta}(X)-X^{-1}F^{\beta}(\frac{1}{X})\right)\nonumber\\ &=&\frac{1}{(\beta+1)} \int_{0}^{\infty} (\beta+1)\Big(xF^{\beta}(x)-\frac{1}{x}F^{\beta}(\frac{1}{x})\Big) dF(x)\nonumber
 \\ &=& \frac{1}{(\beta+1)} \int_{0}^{\infty} (\beta+1)\Big(xF^{\beta}(x)-\frac{1}{x}(1-\bar{F}(\frac{1}{x}))^{\beta}\Big) dF(x)\nonumber\\ &=&\frac{1}{(\beta+1)}  \int_{0}^{\infty} (\beta+1)\Big(xF^{\beta}(x)-\frac{1}{x}(1-F(x))^{\beta}\Big) dF(x)\nonumber\\
  &=& \frac{1}{(\beta+1)} E\left(\max(X_1,\ldots,X_{\beta+1})-\frac{1}{\min(X_1,\ldots,X_{\beta+1})}\right),\label{deltasimp}
\end{eqnarray}
where  $X_1, X_2,\ldots,X_{\beta+1}$ are independently distributed  random variables from ${F}$. One can easily verify that $\Delta(F)=0$ under $H_0$ by considering different choices of log symmetric distributions.

We find the test statistic using the theory of U-statistics. Consider the symmetric kernel
\begin{small}
    \begin{equation}
        \label{kernel}
        h(X_{1},\cdots,X_{\beta+1})=\frac{1}{(\beta+1)} \left(\max(X_{1},\cdots,X_{\beta+1})-\frac{1}{\min(X_{1},\cdots,X_{\beta+1})}\right)
    \end{equation}
\end{small}so that $E(h(X_{1},\cdots,X_{\beta+1}))=\Delta(F)$. Hence the test statistic is given by
\begin{equation}\label{test}
   \widehat{\Delta} =\binom{n}{\beta+1}^{-1}\sum\limits_{C_{\beta+1,n}}h{(X_{i_1},X_{i_2},\cdots,X_{i_{\beta+1}})},
\end{equation}
where the summation is over the set  $C_{\beta+1,n}$ of all combinations of $(\beta+1)$ distinct elements $\lbrace i_1,i_2,\cdots,i_{\beta+1}\rbrace$ chosen from $\lbrace 1,2,\cdots,n\rbrace$.
Now we simplify (\ref{test})  in terms of order statistics. Let $X_{(i)}$ be the $i$th order statistics based on a random sample  $X_{1}, ...,X_{n}$  from ${F}$. Then, we have the following expressions:
\begin{equation*}
   \sum_{i=1}^{n}\sum_{j=1; j<i}^{n}\max(X_i,X_j) =\sum_{i=1}^{n}(i-1)X_{(i)}
\end{equation*}
and
\begin{equation*}
  \sum_{i=1}^{n}\sum_{j=1; j<i}^{n}\sum_{k=1; k<j}^{n}\max(X_i,X_j,X_k) =\frac{1}{2}\sum_{i=1}^{n}(i-1)(i-2)X_{(i)}.
\end{equation*}
In general, we have
\begin{equation}\label{max}
    \sum\limits_{C_{\beta+1,n}} \max(X_{i_1},\ldots,X_{i_{\beta+1}})=\sum_{i=1}^{n}\dbinom{i-1}{\beta}X_{(i)}.
\end{equation}
Also, we have
\begin{equation*}
   \sum_{i=1}^{n}\sum_{j=1; j<i}^{n}\min(X_i,X_j) =\sum_{i=1}^{n}(n-i)X_{(i)}
\end{equation*}
and
\begin{eqnarray*}
    \sum_{i=1}^{n}\sum_{j=1; j<i}^{n}\sum_{k=1; k<j}^{n}\min(X_i,X_j,X_k) &=&\sum_{i=1}^{n}\frac{(n-i-1)(n-i)}{2}X_{(i)}\\&=&\sum_{i=1}^{n}\dbinom{n-i}{2}X_{(i)}.
\end{eqnarray*}
With some algebraic manipulation, we obtain
\begin{equation}\label{min}
    \sum\limits_{C_{\beta+1,n}} \frac{1}{\min(X_{i_1},\ldots,X_{i_{\beta+1}})}=\sum_{i=1}^{n}\binom{n-i}{\beta}\frac{1}{X_{(i)}}.
\end{equation}
Using (\ref{max}) and (\ref{min}) we can rewrite (\ref{test}) as
\begin{equation}\label{testsimple}
     \widehat{\Delta} =\binom{n}{\beta+1}^{-1}\frac{1}{(\beta+1)}\sum_{i=1}^{n}\left(\dbinom{i-1}{\beta}X_{(i)}-\binom{n-i}{\beta}\frac{1}{X_{(i)})}\right).
\end{equation}
The test procedure to reject the null hypothesis $H_0$ against the alternative hypothesis $H_1$ for large values of $\widehat{\Delta}$.
\begin{remark}The proposed test can be generalized to test the log symmetry about a point $\theta>0$. Given Definition 1 and Corollary 1, we can  modify the departure measure in (\ref{deltasimp}) as
\begin{eqnarray*}
 \Delta^*(F)=\frac{1}{(\beta+1)}E\left(\frac{\max(X_1,\ldots,X_{\beta+1})}{\theta}-\frac{\theta}{\min(X_1,\ldots,X_{\beta+1})}\right).
\end{eqnarray*}In this case, the test statistics becomes
\begin{equation}\label{testsimplenew}
     \widehat{\Delta}^* =\binom{n}{\beta+1}^{-1}\frac{1}{(\beta+1)}\sum_{i=1}^{n}\left(\dbinom{i-1}{\beta}\frac{X_{(i)}}{\theta}-\binom{n-i}{\beta}\frac{\theta}{X_{(i)})}\right).
\end{equation} For example, if $X$ has log-normal distribution with parameters $\mu$ and $\sigma$, then test statistics in (\ref{testsimplenew}) becomes
\begin{equation}\label{testsimplenewlog}
     \widehat{\Delta}^* =\binom{n}{\beta+1}^{-1}\frac{1}{(\beta+1)}\sum_{i=1}^{n}\left(\dbinom{i-1}{\beta}\frac{X_{(i)}}{\exp(\mu)}-\binom{n-i}{\beta}\frac{\exp(\mu)}{X_{(i)})}\right).
\end{equation}
\end{remark}
    We find a critical region of the proposed test based on the asymptotic distribution of the $\widehat{\Delta}$.  As the test statistic is a U-statistic $ \widehat{\Delta}$ ia consistent estimator of $ {\Delta}(F)$.  Next, we obtain the asymptotic distribution of $ \widehat{\Delta}$.
\begin{theorem}
    As $n\to \infty$, $\sqrt{n}(\widehat{\Delta}-\Delta(F))$ converges in distribution to a normal random variable with mean zero and variance $\sigma^2$, where $\sigma^2$ is given by
\begin{eqnarray}
    \label{asympvar}
    \sigma^2&=& Var\Bigg(X F^{\beta}(X)+\int_{X}^{\infty}y(\beta+1) F^{\beta}(y)dF(y)-\frac{1}{X} \bar{F}^{\beta}(X)\nonumber\\&&\qquad\qquad\qquad\qquad-\int_{0}^{X}\frac{1}{y}(\beta+1) \bar{F}^{\beta}(y)dF(y)\Bigg).
\end{eqnarray}
\end{theorem}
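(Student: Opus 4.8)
The plan is to read $\widehat{\Delta}$ as a one-sample $U$-statistic of degree $m=\beta+1$ with the symmetric kernel $h$ of \eqref{kernel}, for which $E\big(h(X_1,\dots,X_{\beta+1})\big)=\Delta(F)$, and then to apply the classical central limit theorem for $U$-statistics. First I would record the mild moment condition under which everything runs: if $E(X^2)<\infty$ and $E(X^{-2})<\infty$, then, since $\max^2\le\sum_i X_i^2$ and $\min^{-2}\le\sum_i X_i^{-2}$, the kernel is square integrable, so the non-degenerate $U$-statistic CLT applies and $\sqrt{n}(\widehat{\Delta}-\Delta(F))$ is asymptotically normal with variance $\sigma^2=(\beta+1)^2\,\zeta_1$, where $\zeta_1=\mathrm{Var}\big(h_1(X_1)\big)$ and $h_1(x)=E\big(h(x,X_2,\dots,X_{\beta+1})\big)$ is the first-order (H\'ajek) projection, the $X_2,\dots,X_{\beta+1}$ being i.i.d.\ from $F$. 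The theorem then reduces to evaluating $h_1$ explicitly and identifying $(\beta+1)h_1(X)$ with the random variable inside $\mathrm{Var}(\cdot)$ in \eqref{asympvar}.

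The heart of the proof is the computation of $h_1$, which I would split into its maximum and minimum contributions. Conditioning one coordinate at the value $x$ leaves $\beta$ i.i.d.\ observations, whose maximum has distribution function $F^{\beta}$ and whose minimum has survival function $\bar F^{\beta}$. Breaking each integral at $x$ gives
\begin{equation*}
E\big(\max(x,X_2,\dots,X_{\beta+1})\big)=x\,F^{\beta}(x)+\int_{x}^{\infty} y\,dF^{\beta}(y),
\end{equation*}
and, by the same device applied to the reciprocal of the minimum,
\begin{equation*}
E\Big(\tfrac{1}{\min(x,X_2,\dots,X_{\beta+1})}\Big)=\frac{1}{x}\,\bar F^{\beta}(x)+\int_{0}^{x}\frac{1}{y}\,d\big(1-\bar F^{\beta}(y)\big).
\end{equation*}
Subtracting the second from the first and rewriting the Stieltjes integrals in density form, these combine to produce $(\beta+1)h_1(x)$ as the four-term integrand displayed inside $\mathrm{Var}(\cdot)$ in \eqref{asympvar}.

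Having identified the projection, the conclusion is immediate from the H\'ajek decomposition: $\sqrt{n}(\widehat{\Delta}-\Delta(F))$ has the same limit as $\tfrac{\beta+1}{\sqrt{n}}\sum_{i=1}^{n}\big(h_1(X_i)-\Delta(F)\big)$, a normalized sum of i.i.d.\ summands with mean zero and variance $(\beta+1)^2\zeta_1$, to which the ordinary central limit theorem yields the $N(0,\sigma^2)$ limit with $\sigma^2$ as in \eqref{asympvar}.

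I expect the main obstacle to be the projection computation rather than the limit theorem itself: correctly pinning down the conditional laws of $\max$ and $\min$ of the remaining $\beta$ observations, carrying out the integration by parts at the two boundaries $0$ and $\infty$ while verifying that the boundary terms vanish under the assumed moments, and matching the result term by term with \eqref{asympvar}. A secondary point worth checking is non-degeneracy, namely $\sigma^2>0$, which guarantees a genuine (rather than degenerate) normal limit at the $\sqrt{n}$ scale; this should be confirmed both under fixed alternatives and, via $\zeta_1>0$, under the null.
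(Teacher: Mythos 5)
Your proposal follows essentially the same route as the paper: both read $\widehat{\Delta}$ as a non-degenerate $U$-statistic of degree $\beta+1$, compute the first-order projection $h_1(x)=E\big(h(x,X_2,\dots,X_{\beta+1})\big)$ by conditioning on one coordinate and using that the maximum (resp.\ minimum) of the remaining $\beta$ observations has distribution function $F^{\beta}$ (resp.\ survival function $\bar F^{\beta}$), and both then invoke the classical CLT for $U$-statistics so that the limit variance is $(\beta+1)^2\,\mathrm{Var}\big(h_1(X)\big)$. Your added remarks on square-integrability of the kernel and on non-degeneracy are sensible housekeeping that the paper leaves implicit.

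One step deserves scrutiny: your (correct) Stieltjes identity $\int_x^{\infty}y\,dF^{\beta}(y)=\beta\int_x^{\infty}y\,F^{\beta-1}(y)\,dF(y)$ does \emph{not} reduce to the integrand $y(\beta+1)F^{\beta}(y)\,dF(y)$ appearing in the stated formula for $\sigma^2$ (and in the paper's own projection computation); the latter is the density of the maximum of $\beta+1$, not $\beta$, observations, and the analogous remark applies to the minimum term. So the ``term-by-term match'' you assert at the end would in fact expose a discrepancy between your computation and the displayed variance. Your version of the projection is the right one; do not gloss over the comparison, but rather flag that the stated $\sigma^2$ should carry $\beta\,F^{\beta-1}(y)$ and $\beta\,\bar F^{\beta-1}(y)$ in the two integrals.
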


\begin{proof}
Using the central limit theorem for U-statistics, as $n \to \infty$, $\sqrt{n}(\widehat{\Delta}-\Delta(F))$ converges in distribution to a normal random variable with mean zero and variance $(\beta+1)^2\sigma^2$, where $\sigma^2$ is the asymptotic variance of $\widehat{\Delta}$ and is given by (Lee, 2019)
\begin{equation}\label{varexp}
    \sigma^2= Var\big(E(h(X_1, X_2,\dots, X_{\beta+1})|X_1)\big).
\end{equation}
Denote $Z_1=\max(X_{2},\cdots,X_{\beta+1})$, $Z_2=\min(X_{2},\cdots,X_{\beta+1})$ and $I$ denotes the indicator function. Consider
 \begin{eqnarray}\label{varexpfinal}
    && \hskip-0.4inE(h(X_1, X_2, \ldots, X_{\beta+1})|X_1=x)\nonumber \\&=&\frac{1}{(\beta+1)}E\Bigg(\max(x,X_{2},\cdots,X_{\beta+1})-\frac{1}{\min(x,X_{2},\cdots,X_{\beta+1})}\Bigg)\nonumber\\
     &=&\frac{1}{(\beta+1)}E\Bigg(x I(x>Z_1)+ Z_1I(Z_1>x)-\frac{1}{x}I(x<Z_2)- \frac{1}{Z_2}I(Z_2<x)\Bigg)\nonumber\\
     &=&\frac{1}{(\beta+1)}\Bigg(x F^{\beta}(x)+\int_{x}^{\infty}y(\beta+1) F^{\beta}(y)dF(y)-\frac{1}{x} \bar{F}^{\beta}(x) \nonumber\\&& \qquad\qquad\qquad\qquad- \int_{0}^{x}\frac{1}{y}(\beta+1) \bar{F}^{\beta}(y)dF(y)\Bigg).
 \end{eqnarray}
Substituting (\ref{varexpfinal}) in (\ref{varexp}), we have the asymptotic variance specified in the equation (\ref{asympvar}).  Hence the proof of the theorem.

\end{proof}

Next, we obtain the asymptotic null distribution of the test statistic. Note that under $H_0$, $\Delta(F)=0$. Hence we have the following result.
\begin{corollary}
    Under $H_0$, as $n\to \infty$, $\sqrt{n}\widehat{\Delta}$ converges in distribution to a normal random variable with mean zero and variance $\sigma^2_0$, where $\sigma^2_0$ is the value of the asymptotic variance  $\sigma^2$ evaluate under $H_0$. 
    \end{corollary}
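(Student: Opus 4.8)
The plan is to derive this directly from the preceding Theorem together with the characterization in Corollary \ref{cor1}. The Theorem already establishes that $\sqrt{n}(\widehat{\Delta}-\Delta(F))$ is asymptotically normal with mean zero and variance $\sigma^2$ given in (\ref{asympvar}), valid for any fixed $F$ satisfying the stated moment conditions. Since the limiting law is indexed only through $F$, via the centering $\Delta(F)$ and the scale $\sigma^2$, the corollary should follow by specializing that conclusion to $F\in\mathcal{F}$.

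First I would invoke Corollary \ref{cor1}: under $H_0$ the distribution $F$ is log-symmetric about $1$, hence the characterizing identity (\ref{charect}) holds and the departure measure satisfies $\Delta(F)=0$. Substituting this into the statement of the Theorem collapses the centering term, so that $\sqrt{n}(\widehat{\Delta}-\Delta(F))=\sqrt{n}\,\widehat{\Delta}$, and the asymptotic normality transfers verbatim to $\sqrt{n}\,\widehat{\Delta}$. The limiting variance is then simply $\sigma^2$ evaluated at an $F\in\mathcal{F}$, which I would denote $\sigma_0^2$; concretely one replaces $F$ in (\ref{asympvar}) by a log-symmetric distribution function and simplifies using the symmetry relation $F(x)=\bar{F}(1/x)$, equivalently $\bar{F}(x)=F(1/x)$, that holds under log-symmetry about $1$. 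This yields the closed form of $\sigma_0^2$ referenced in the statement.

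The one point requiring care is \emph{non-degeneracy}: the first-order projection of the U-statistic used in the Theorem produces a genuine normal limit only when $\sigma_0^2>0$. I would therefore verify that the conditional-expectation function in (\ref{varexpfinal}) is not almost surely constant under $H_0$, so that its variance $\sigma_0^2$ is strictly positive. Were it to vanish for some degenerate log-symmetric $F$, the scaling $\sqrt{n}$ would be inappropriate and a higher-order (degenerate U-statistic) limit would be needed instead. I expect this positivity to be the main, though mild, obstacle, since for every non-trivial $F\in\mathcal{F}$ the integrand in (\ref{varexpfinal}) genuinely varies with $X$. Granting $\sigma_0^2>0$, the conclusion is immediate from the Theorem, which completes the proof.
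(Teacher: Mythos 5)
Your proposal is correct and takes essentially the same route as the paper: the corollary is obtained by specializing Theorem 2 to the null, where the characterization gives $\Delta(F)=0$ so the centering vanishes and the normal limit transfers to $\sqrt{n}\,\widehat{\Delta}$ with variance $\sigma^2$ evaluated under $H_0$. Your additional remarks on the non-degeneracy condition $\sigma_0^2>0$ and on simplifying the variance via $F(x)=\bar{F}(1/x)$ go slightly beyond what the paper records, but do not change the argument.
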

Using Corollary 1, we can obtain a test based on normal approximation and we reject the null hypothesis $H_0$ against the alternative $H_1$ if
\begin{equation*}
    \dfrac{\sqrt{n}|\widehat{\Delta}|}{\hat{\sigma}_0}\geq Z_{\alpha/2},
\end{equation*}
where $\hat{\sigma}_0$ is a consistent estimator of $\sigma_0$ and $Z_\alpha$ is the upper $\alpha$-percentile point of a standard normal distribution. Implementing the test based on normal approximation is not simple as finding a consistent estimator of $\sigma^2_0$ is very difficult. This motivates us to develop an empirical likelihood-based test that is distribution-free. Next, we discuss JEL-based tests for testing log-symmetry.



\section{Jackknife empirical likelihood ratio test}
Thomas and Grunkemier (1975) introduced the concept of empirical likelihood to obtain a confidence interval for survival probabilities under right censoring. Owen's seminal papers (Owen (1988), Owen (1990)) extended the empirical likelihood to a general methodology. In empirical likelihood, we need to maximize non-parametric likelihood supported by the data subject to some constraints. If the constraints are not linear (U-statistics with a degree greater than 2), computational difficulties are encountered while evaluating the likelihood. To overcome this problem, Jing et al.  (2009) introduced the jackknife empirical likelihood (JEL) method to find a confidence interval for the desired parametric function. They illustrated the JEL methodology using one sample and two sample U-statistics. This approach is widely accepted among researchers as it combines the effectiveness of the likelihood approach and the non-parametric nature of the jackknife resampling technique. The empirical likelihood and JEL are particularly useful when the distribution is not completely specified under the null; See  Sudheesh et al.  (2022) Deemat et al. (2024), Saparya and Sudheesh (2024) and Jiang and Zhao (2024).  As we are interested in developing a test for a class property (log symmetry), JEL can be effectively used.

\par Next, we discuss how to implement the JEL ratio test for testing the log symmetry of a distribution.  First, we find jackknife pseudo-values for developing the test. Denote $\widehat{\Delta}_{k}$, $k=1,2,...,n$ are  the value of the test statistics $\widehat{\Delta}$ obtained using $(n-1)$ observations  $X_1$, $X_2$,..., $X_{k-1}$,$ X_{k+1}$,...,$X_n$; $k=1,2,...,n$.    Then jackknife pseudo-values for $\widehat{\Delta}$ are given by
\begin{equation}\label{jsv}
\widehat{V}_{k}= n \widehat{\Delta}-(n-1)\widehat{\Delta}_{k}; \qquad k=1,2,\cdots,n.
\end{equation}
We use the pseudo-values $\widehat{V}_{k}$ in the empirical likelihood to develop the test.

	Let $p=(p_1,\ldots,p_n)$ be a probability vector assigned to $\widehat{V}_{k}$, $k=1,2,\cdots,n$.  It is well-known that $\prod_{i=1}^{n}p_i$ subject to $\sum_{i=1}^{n}p_i=1$ attain its maximum value $n^{-n}$ at $p_i=1/n$.
	Hence the jackknife empirical likelihood ratio for testing logsymmetry based on the departure measure $\Delta(F)$ is defined as
	$$R(\Delta)=\max\Big\{\prod_{i=1}^{n} np_i,\,\,\sum_{i=1}^{n}p_i=1,\,\,\sum_{i=1}^{n}p_i\nu_i=0\Big\},$$
	where
	$$p_i=\frac{1}{n}\frac{1}{1+\lambda \nu_i}$$
	and  $\lambda$ satisfies $$\frac{1}{n}\sum_{i=1}^{n}\frac{\nu_i}{1+\lambda \nu_i}=0.$$
	Hence the jackknife empirical log-likelihood ratio is given by $$\log R(\Delta)=-\sum\log(1+\lambda \nu_i).$$
	Next theorem explains the limiting distribution of $\log R(\Delta)$ which can be used to construct the JEL ratio test for log symmetry based on $\Delta$.  Using Theorem 1 of Jing et al. (2009) we have the following result as an analog of Wilk's theorem.

\begin{theorem}
  Let  $E\left(h^2(X_{1}, X_{2},..., X_{\beta+1})\right)<\infty$ and $\sigma^2=Var(g(X))$ $>0,$  where $g(X)=E\left(h(X_1, X_{2},..., X_{\beta+1})|X_1=X\right)$. Under $H_0$, as $n\rightarrow\infty$, $-2\log R(\Delta)$ converges in  distribution to a $\chi^{2}$ random variable with one degree of freedom.
\end{theorem}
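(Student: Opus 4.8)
The plan is to reduce the statement to the general jackknife empirical likelihood Wilks theorem (Theorem 1 of Jing et al.\ (2009)), so that the real work is to verify that our specific kernel $h$ in (\ref{kernel}) and the pseudo-values $\widehat{V}_k$ in (\ref{jsv}) satisfy the hypotheses of that theorem. The two moment assumptions, $E(h^2)<\infty$ and $\sigma^2=Var(g(X))>0$ with $g(X)=E(h(X_1,\dots,X_{\beta+1})\mid X_1=X)$, are precisely the non-degeneracy and finite-variance conditions needed to run the standard empirical-likelihood Taylor expansion, so the scheme is to show that the pseudo-values behave asymptotically like i.i.d.\ mean-zero variables and then invoke the usual Lagrange-multiplier argument.

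First I would record the jackknife identity $\bar V:=\frac{1}{n}\sum_{k=1}^{n}\widehat{V}_k=\widehat{\Delta}$, which follows from the combinatorial fact that for a U-statistic of degree $\beta+1$ one has $\frac{1}{n}\sum_{k}\widehat{\Delta}_k=\widehat{\Delta}$ (each $(\beta+1)$-subset is retained in exactly $n-(\beta+1)$ of the $n$ deletions). This pins the center of the pseudo-values at $\widehat{\Delta}$, which under $H_0$ estimates $\Delta(F)=0$. Next I would apply Hoeffding's H-decomposition to the kernel: writing the first projection as $g$, the leading term of each pseudo-value is $\widehat{V}_k=(\beta+1)\big(g(X_k)-\Delta(F)\big)+\Delta(F)+r_k$, where the remainder $r_k$ collects the higher-order degenerate components. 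Under $H_0$ this reads $\widehat{V}_k\approx(\beta+1)g(X_k)$, an i.i.d.\ sequence with mean zero and variance $(\beta+1)^2\sigma^2>0$, with $g(X)$ obtained explicitly from (\ref{varexpfinal}).

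With this representation in hand I would verify the three conditions driving the EL limit: (i) the central limit theorem $\frac{1}{\sqrt n}\sum_{k}\widehat{V}_k\xrightarrow{d}N\!\big(0,(\beta+1)^2\sigma^2\big)$, which is just the CLT for U-statistics together with $E(h^2)<\infty$; (ii) the law of large numbers $S_n^2:=\frac{1}{n}\sum_{k}(\widehat{V}_k-\bar V)^2\xrightarrow{p}(\beta+1)^2\sigma^2$, using that the squared remainder is negligible; and (iii) the negligibility of the extremes, $\max_{1\le k\le n}|\widehat{V}_k|=o_p(\sqrt n)$, which follows from $E(h^2)<\infty$ by a standard $\max=o_p(\sqrt n)$ argument for sums of squares. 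Granting these, I would solve the constraint $\frac{1}{n}\sum_i \nu_i/(1+\lambda\nu_i)=0$ to obtain $\lambda=O_p(n^{-1/2})$, Taylor-expand $\log R(\Delta)=-\sum_i\log(1+\lambda\nu_i)$ to second order, and conclude that
\begin{equation*}
-2\log R(\Delta)=\frac{\big(n^{-1/2}\sum_i\nu_i\big)^2}{S_n^2}+o_p(1)\xrightarrow{d}\chi^2_1
\end{equation*}
by Slutsky's theorem, since the numerator is asymptotically $N(0,(\beta+1)^2\sigma^2)$ squared and the denominator converges to $(\beta+1)^2\sigma^2$.

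The main obstacle I anticipate is step (iii) and the control of the remainder $r_k$ in the H-decomposition, because the kernel contains the term $1/\min(X_{i_1},\dots,X_{i_{\beta+1}})$, which through $1/X_{(i)}$ can be very large for small order statistics; showing that $E(h^2)<\infty$ alone is enough to force $\max_k|\widehat{V}_k|=o_p(\sqrt n)$ and to make the deletion remainders asymptotically negligible is where the argument requires genuine care rather than routine bookkeeping. Once these tail terms are tamed, the passage to the $\chi^2_1$ limit is the classical empirical-likelihood computation.
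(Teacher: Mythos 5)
Your proposal is correct and follows the same route as the paper, which simply invokes Theorem 1 of Jing et al.\ (2009) under the stated moment and non-degeneracy conditions; in fact the paper gives no further argument at all, so your reconstruction of the Hoeffding decomposition of the pseudo-values, the Lagrange-multiplier expansion, and the $\max_k|\widehat{V}_k|=o_p(\sqrt n)$ control supplies exactly the details the paper leaves implicit. Your closing caution about the $1/\min(\cdot)$ term is well placed, but it is absorbed into the hypothesis $E\bigl(h^2(X_1,\dots,X_{\beta+1})\bigr)<\infty$, which is assumed rather than verified in the paper as well.
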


Using Theorem 3, we develop a JEL ratio test for testing log symmetry.  We reject the null hypothesis $H_0$ against $H_1$ at a  significance level $\alpha$ if $$-2\log R(\Delta)> \chi^2_{1,1-\alpha},$$
where $\chi^2_{1,1-\alpha}$ is a $(1-\alpha)$ percentile point of a $\chi^2$ random variable with one degree of freedom.
\section{Simulation study}\label{sec}
We conduct a Monte Carlo simulation study to assess the performance of the JEL ratio test for log symmetry. We use R software to conduct the simulation. The simulation is repeated ten thousand times considering different sample sizes ($n=25,50,75,100,200$) taking $s=1,2,3$.  In the simulation, the empirical likelihood is evaluated using the `emplik' package available in R.

First, we find the empirical type-I error of the test. For this purpose, we generated a sample from
log-normal, log-logistic, log-Laplace, log-Cauchy
and Birnbaum-Saunders distributions. The log-symmetric distributions are applications in various fields as they are useful in modeling data exhibiting a high degree of skewness. The empirical type-I error is presented in Tables 1 and 2. In Table 1 we reported the empirical type-I error obtained for lognormal distribution with various parameter settings. In this case, we use the test statistics in (\ref{testsimplenewlog}) to find the jackknife pseudo values. In Table 2, we tabulated the empirical type-I error obtained for standard log-logistic, log-Laplace, log-Cauchy  and Birnbaum-Saunders distributions. From Tables 1 and 2, we observe that the empirical type-I error converges to the chosen significance level as the sample size increases. We noted that for the small values of $n$, the proposed test has a type I error that is a little higher than the significance level.

\begin{table}[h]
\caption{Empirical type-I error of the JEL ratio test at significance level $0.05$: lognormal distributions with parameters $\mu$ and $\sigma$. }
\begin{tabular}{ccccccccccccccc}\hline
&$n$&$(\mu,\sigma)=(0,1)$ &$(\mu,\sigma)=(1,1)$&$(\mu,\sigma)=(2,1)$&$(\mu,\sigma)=(3,1)$\\\hline
\multirow{6}{*}{$\beta=1$}
&25&0.118&0.104&0.110&0.118\\
&50&0.091&0.092&0.087&0.089\\
&75&0.083&0.069&0.094&0.086\\
&100&0.065&0.075&0.090&0.082\\
&200&0.055&0.057&0.068&0.075\\
&500&0.051&0.050&0.054&0.055\\
\hline
\multirow{5}{*}{$\beta=2$}
&25&0.122&0.124&0.123&0.124\\
&50&0.092&0.096&0.101&0.105\\
&75&0.092&0.071&0.082&0.079\\
&100&0.071&0.090&0.076&0.067\\
&200&0.065&0.075&0.063&0.064\\
&500&0.054&0.054&0.055&0.052
\\\hline\multirow{4}{*}{$\beta=3$}
&25&0.144&0.139&0.127&0.143\\
&50&0.108&0.116&0.113&0.114\\
&75&0.104&0.097&0.095&0.098\\
&100&0.076&0.077&0.085&0.077\\
&200&0.067&0.059&0.062&0.065\\
&500&0.054&0.051&0.052&0.054\\
 \hline
\end{tabular}
\end{table}

\begin{table}[h]
\caption{Empirical type-I error of the JEL ratio test at significance level $0.05$: Different log symmetric distributions. }
\begin{tabular}{ccccccccccccccc}\hline
&$n$&log-logistic & log-Laplace&log-Cauchy & Birnbaum-Saunders\\\hline
\multirow{6}{*}{$\beta=1$}
&25&0.118&0.100&0.101&0.116\\
&50&0.090&0.0902&0.089&0.089\\
&75&0.084&0.067&0.093&0.082\\
&100&0.060&0.070&0.091&0.080\\
&200&0.054&0.055&0.065&0.065\\
&500&0.050&0.050&0.052&0.051\\
\hline
\multirow{5}{*}{$\beta=2$}
&25&0.120&0.122&0.123&0.123\\
&50&0.090&0.092&0.100&0.100\\
&75&0.084&0.074&0.080&0.076\\
&100&0.070&0.0740&0.066&0.066\\
&200&0.065&0.065&0.060&0.061\\
&500&0.052&0.051&0.051&0.052\\\hline\multirow{4}{*}{$\beta=3$}
&25&0.140&0.139&0.130&0.140\\
&50&0.110&0.115&0.114&0.111\\
&75&0.91&0.087&0.085&0.088\\
&100&0.071&0.070&0.065&0.067\\
&200&0.061&0.058&0.062&0.065\\
&500&0.051&0.051&0.052&0.050\\
 \hline
\end{tabular}
\end{table}
 Next, we find the empirical power of the JEL  ratio test.  We consider the following distribution to find the
power:
\begin{itemize}
        \item Weibull distribution: $F_{1}(x)=1-e^{(-x/\lambda)^{k}}$, $x>0$, $k,\,\lambda>0$.
         \item Gamma distribution: $F_{2}(x)=\frac{1}{\Gamma(k)} \gamma(k,\frac{x}{\lambda})$, $k,\lambda >0 $ and $\gamma(k,\frac{x}{\lambda})$ is the lower incomplete gamma function.
         \item  Pareto distributions: $F_{3}(x)=(\lambda/x)^{\alpha}$ $x>0$, $\alpha,\, \lambda>0$.
    \item Half-normal distribution (HN ($\sigma$)): $F_{4}(x)= erf(\frac{x}{\sigma\sqrt{2}})$, $\sigma >0$, where \textit{erf} is the error function.
        \end{itemize}
        The empirical power obtained in the simulation study is presented in Table 3. From Table 3, we observe that the JEL ratio test has good power for various alternatives. Also, the power increases as the sample size increases.

        \begin{landscape}
        \begin{table}[h]
\caption{Empirical power of the JEL ratio test for various alternatives at significance level $0.05$ }
\begin{tabular}{ccccccccccccccc}\hline
&$n$ &Gamma (1,0.5) & Gamma (2,1) &Pareto (1,0.5)&Pareto (2,1) &Weibull (1,0.5)&Weibull (1,2)&HN (1) &HN (2)\\\hline
\multirow{5}{*}{$\beta=1$}\\
&25&0.243&0.563&0.811&1.000&1.000&0.247&0.987&0.269\\
&50&0.245&0.660&0.980&1.000&1.000&0.250&1.000&0.353\\
&75&0.375&0.755&0.996&1.000&1.000&0.333&1.000&0.443\\
&100&0.389&0.762&1.000&1.000&1.000&0.374&1.000&0.559\\
&200&0.620&0.869&1.000&1.000&1.000&0.604&1.000&0.770\\
\hline
\multirow{5}{*}{$\beta=2$}\\
&25&0.274&0.550&0.890&1.000&1.000&0.874&0.997&0.329\\
&50&0.316&0.557&0.984&1.000&1.000&0.988&1.000&0.414\\
&75&0.417&0.646&1.000&1.000&1.000&1.000&1.000&0.588\\
&100&0.513&0.677&1.000&1.000&1.000&0.999&1.000&0.680\\
&200&0.717&0.801&1.000&1.000&1.000&1.000&1.000&0.893\\
\hline\multirow{5}{*}{$\beta=3$}\\
&25&0.299&0.482&0.934&   1.000&  1.000&0.904&0.993&0.361\\
& 50&0.371&0.532&0.991&  1.000&  1.000&0.991&1.000&0.526\\
&75&0.463&0.574&1.000&  1.000&  1.000&0.999&1.000&0.635\\
&100&0.538&0.611&1.000&  1.000&  1.000&1.000&1.000&0.768\\
& 200&0.798&0.695&1.000&  1.000&  1.000&1.000&1.000&0.943
\\ \hline
\end{tabular}
\end{table}
 \end{landscape}

\section{Data analysis}\label{sec}
In this Section, we demonstrate the use of our test using two different data sets. First, we consider the tensile strength data given in Sathar  and Jose (2023). The data is given in Table 4.  The data set consists of 50 observations detailing the tensile strength (MPa) of fiber. This data set follows a normal distribution with mean $\mu=3076.88$ and $\sigma=344.362$. The boxplot for the data is Figure 1. The boxplot shows that the data is symmetric. The skewness of the data is obtained as 0.396, which is also an indication that the data is symmetric. For this data, the value of $-2 \log R(\Delta)$ is 1731.703, which is greater than 3.84. Thus we reject the null hypothesis that the data follows a distribution that belongs to the log symmetric class of distributions.

\begin{table}[ht]
    \centering
    \caption{Tensile strength data }
    \begin{tabular}{|c|c|c|c|c|c|c|c|c|}
    \hline
    0.746 & 0.357 & 0.376 & 0.327 & 0.485 & 1.741 & 0.241 & 0.777 & 0.768 \\ \hline
    0.409 & 0.252 & 0.512 & 0.534 & 1.656 & 0.742 & 0.378 & 0.714 & 1.121 \\ \hline
    0.597 & 0.231 & 0.541 & 0.805 & 0.682 & 0.418 & 0.506 & 0.501 & 0.247 \\ \hline
    0.922 & 0.880 & 0.344 & 0.519 & 1.302 & 0.275 & 0.601 & 0.388 & 0.450 \\ \hline
    0.845 & 0.319 & 0.486 & 0.529 & 1.547 & 0.690 & 0.676 & 0.314 & 0.736 \\ \hline
    0.643 & 0.483 & 0.352 & 0.636 & 1.080 &      &      &      &       \\ \hline
    \end{tabular}
  \end{table}

\begin{figure}
    \centering
    \includegraphics[width=0.8\linewidth]{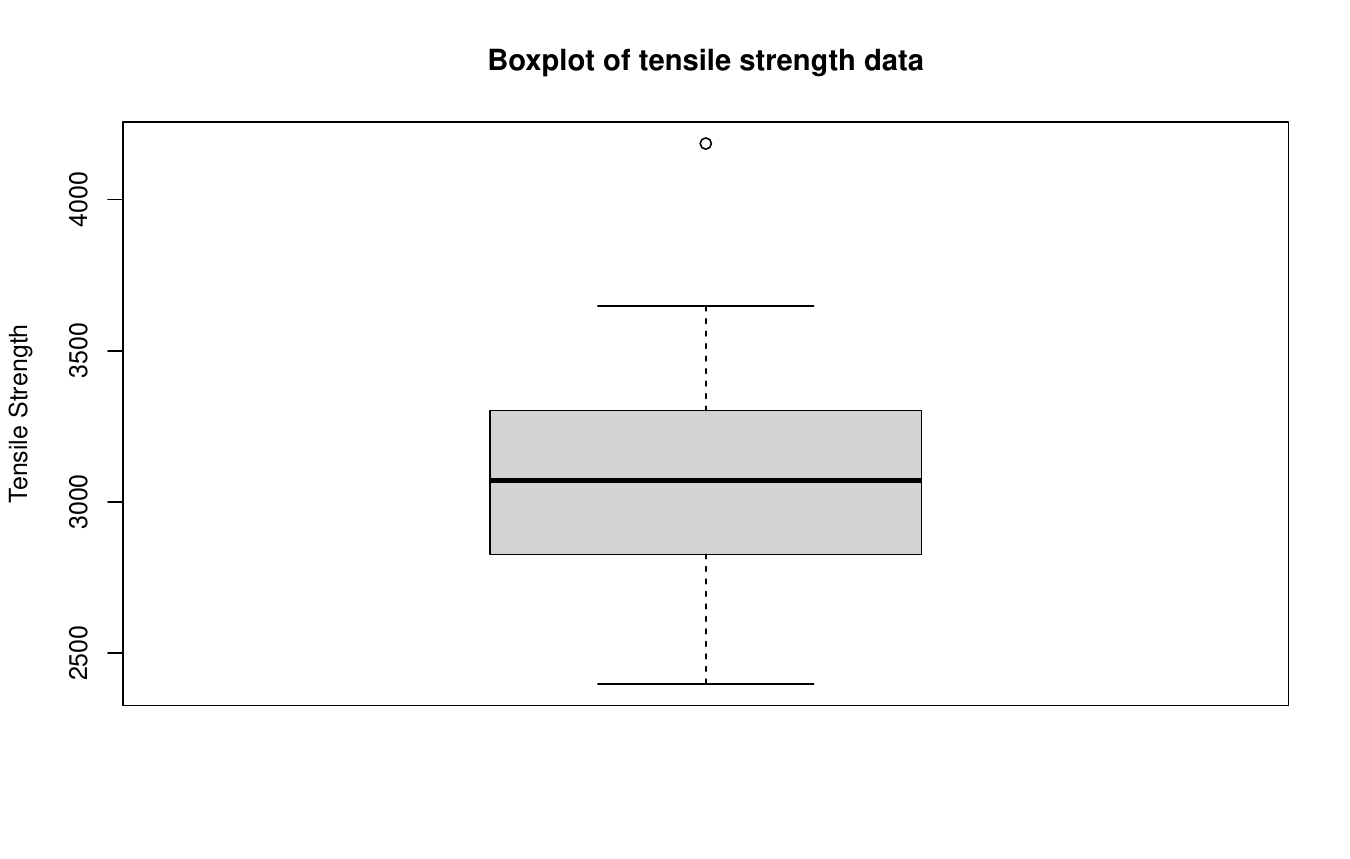}
    \caption{Boxplot of tensile strength data}
    \label{fig:enter-label}
\end{figure}

 Next, we consider the log normal data discussed in Batsidis et. al (2016). The data consists of the gap between two plates measured (in cm)($X$), for each of 50 welded assemblies from the output of a welding process. Data is given in Table 5. For this data, we plotted the Q-Q plot corresponding to the log-normal distribution as shown in Figure 2. For testing log symmetry about point 1, we consider the transformation $Y=\exp((\log(X)-\mu)/\sigma)$ where $\mu$ and $\sigma$ are mean and standard deviation parameters of the log-normal distribution. We find these values using the `EnvStats' package available in R. The value of $-2 \log R(\Delta)$ for the transformed data is calculated to be $0.233$ which is below the threshold of 3.84. We conclude that the data follows log normal distribution, which belongs to the class of log symmetric distribution.

\begin{table}[h]
    \centering
    \caption{The gap between two plates is (measured in cm) }
    \begin{tabular}{|c|c|c|c|c|c|c|c|c|}
    \hline
    0.746 & 0.357 & 0.376 & 0.327 & 0.485 & 1.741 & 0.241 & 0.777 & 0.768 \\ \hline
    0.409 & 0.252 & 0.512 & 0.534 & 1.656 & 0.742 & 0.378 & 0.714 & 1.121 \\ \hline
    0.597 & 0.231 & 0.541 & 0.805 & 0.682 & 0.418 & 0.506 & 0.501 & 0.247 \\ \hline
    0.922 & 0.880 & 0.344 & 0.519 & 1.302 & 0.275 & 0.601 & 0.388 & 0.450 \\ \hline
    0.845 & 0.319 & 0.486 & 0.529 & 1.547 & 0.690 & 0.676 & 0.314 & 0.736 \\ \hline
    0.643 & 0.483 & 0.352 & 0.636 & 1.080 &      &      &      &       \\ \hline
    \end{tabular}
\end{table}
\begin{figure}[h]
    \centering
    \includegraphics[width=0.8\linewidth]{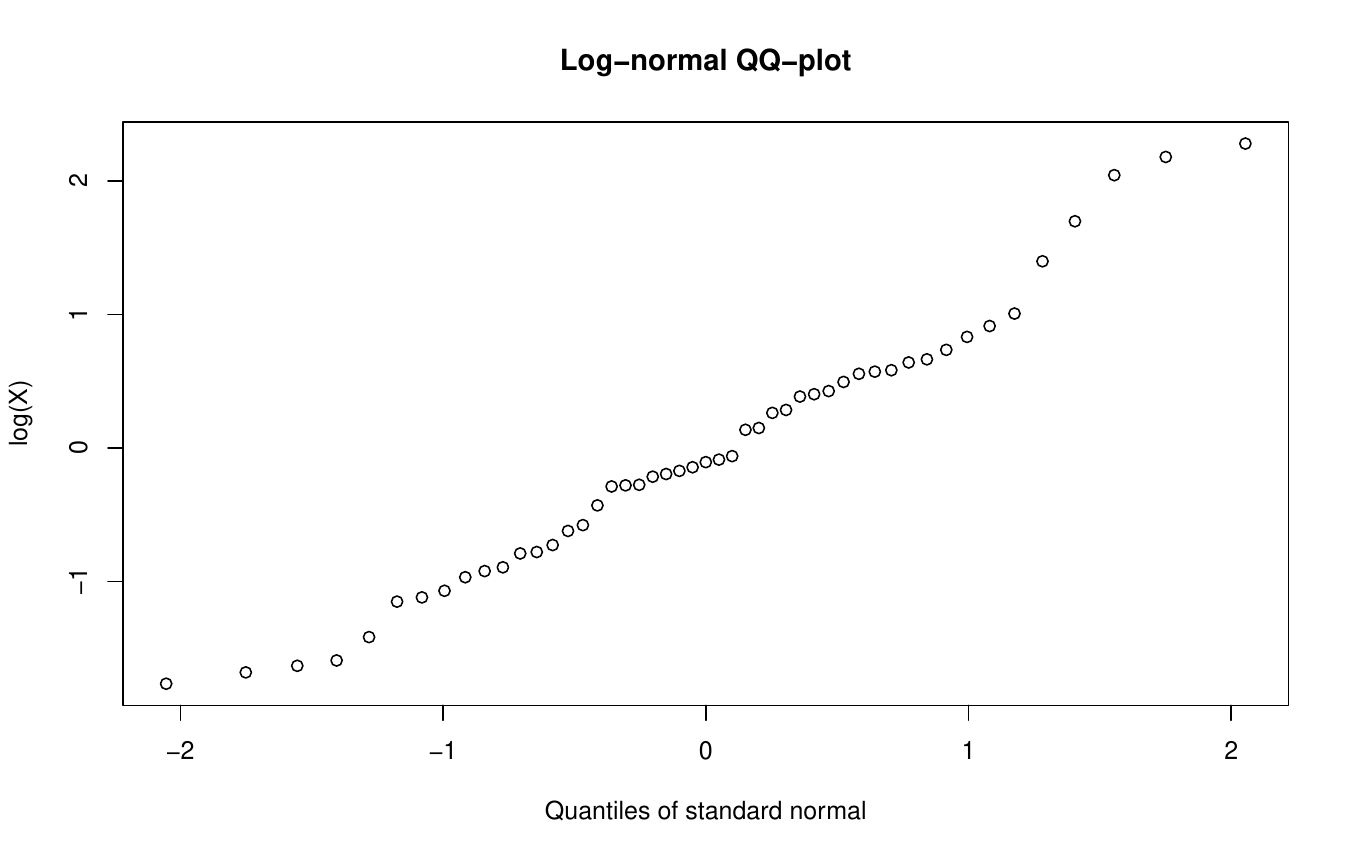}
    \caption{QQ plot of gap between two plates  (measured in cm)}
    \label{fig:enter-label}
\end{figure}
\newpage
\section{Concluding remarks}\label{}
Log symmetric distributions are useful in modeling data that show high skewness and have found applications in various fields. Ahamedi and Balakrishnan (2024) obtained several characterization results for log symmetric distribution. Motivated by their results, we obtained a characterization log of symmetric distributions based on probability-weighted moments. Using the characterization based on probability-weighted moments, we propose a goodness of fit test for testing log symmetry. The asymptotic distributions of the test statistics under both null and alternate distributions are obtained.  As the normal-based test is difficult to implement, we also propose a jackknife empirical likelihood (JEL) ratio test for testing log symmetry. We conduct a Monte Carlo Simulation to evaluate the performance of the JEL ratio test. Simulation results show that the empirical type-I error converges to the chosen significance level as the sample size increases and has good power for different choices of alternatives.  Finally, we illustrated our methodology using various data sets.

The log-symmetric distributions, defined on the positive real line, are commonly used in lifetime data analysis. For example, the log-normal distribution is often used to model the time to failure of systems or components, particularly when the data are skewed. Various types of censoring and truncation are considered in lifetime data analysis. Right-censored and left-truncated right-censored (LTRC)  data are common in lifetime analysis.  One can develop JEL-based tests under these situations. The U-statistics for right-censored and LRTC data (Datta et al. (2010) and Sudheesh et al. (2023)) can be considered while developing the JEL ratio test in these scenarios.


\begin{thebibliography}{999}
\bibitem{} Ahmadi, J. and Balakrishnan, N. (2024). Characterizations of continuous log-symmetric distributions based on properties of order statistics. {\em Statistics}, 1--25.
\bibitem{} Batsidis, A., Economou, P., and Tzavelas, G. (2016). Tests of fit for a lognormal distribution. {\em Journal of Statistical Computation and Simulation}, 86, 215--235.

\bibitem{} Datta, S., Bandyopadhyay, D. and  Satten, G. A. (2010). Inverse probability of censoring weighted U‐statistics for right‐censored data with an application to testing hypotheses. {\em Scandinavian Journal of Statistics}, 37, 680--700.

\bibitem{} Deepesh B., Sudheesh, K. K. and Sreelakshmi N (2021). Jackknife Empirical likelihood inference for probability weighted moments. {\em Journal of the Korean Statistical Society}, 50, 98--116.
\bibitem{} Deemat C Mathew., Reeba Mary Alex and Sudheesh K. K. (2024). Jackknife empirical likelihood ratio test for testing mean time to failure order. {\em Statistical Papers}, 65, 79--92.

\bibitem{} Ferrari, S. L., and Fumes, G. (2017). Box–Cox symmetric distributions and applications to nutritional data. {\em AStA Advances in Statistical Analysis}, 101, 321--344.


\bibitem{Jing 09}Jing, B. Y., Yuan, J. and Zhou, W. (2009). Jackknife empirical likelihood. {\em Journal of the American Statistical Association}, 104, 1224--1232.

\bibitem{} Jiang, H. and  Zhao, Y. (2024). Bayesian empirical likelihood inference for the mean absolute deviation. {\em Statistics}, 58, 277--301.

\bibitem{}
Owen, A. B. (1988). Empirical likelihood ratio confidence intervals for a single functional. {\em Biometrika}, 75, 237--249.
\vskip4pt
\bibitem{} Owen, A. (1990). Empirical likelihood ratio confidence regions. {\em The Annals of Statistics}, 18, 90--120.

\bibitem{} Marshall, A.W. and Olkin, I. (2007). {\em Life Distributions; Structure of
Nonparametric, Semiparametric, and Parametric Families}. Wiley, Hoboken.

\bibitem{}Seshadri, V. (1965). On random variables which have the same distribution as their reciprocals. {\em Canadian Mathematical Bulletin}, 8, 819--824.


\bibitem{} Saparya, S. and Sudheesh K. K. (2024). Jackknife empirical likelihood ratio test for testing the equality of semivariance. {\em Statistical Papers}, Accepted.

\bibitem{} Sathar, E.I.A. and Jose, J. (2023) Extropy based on records for random variables representing residual life. {\em Communications in Statistics-Simulation and Computation}
52, 196--206.

\bibitem{} Sudheesh K. K., Isha Dewan and Litty Mathew. (2022). Jackknife empirical likelihood ratio test for testing mean residual life and mean past life ordering. {\em Statistics}, 56, 1012--1028.

\bibitem{}  Sudheesh, K. K., Anjana, S. and Xie, M. (2023). U-statistics for left truncated and right censored data. {\em Statistics}, 57, 900--917.

\bibitem{ }Sudheesh, K. K., Sreedevi, E.P. and Balakrishnan, N. (2024). Relationships between cumulative entropy/extropy, Gini mean difference and probability weighted moments.  {\em Probability in the Engineering and Informational Sciences}, 38, 28--38.
\bibitem{TG75}Thomas, D. R. and  Grunkemeier, G. L. (1975). Confidence interval estimation of survival probabilities for censored data. {\em Journal of the American Statistical Association}, 70, 865--871.

\bibitem{} Vanegas, L. H. and Paula, G. A. (2016).  Log-symmetric distributions: statistical properties and parameter estimation. {\em Brazilian Journal of Probability and Statistics}, 30, 196--220.
\bibitem{} Vexler, A., Zou, L. and Hutson, A. D. (2017). An extension to empirical likelihood for evaluating probability weighted moments. {\em Journal of Statistical Planning and Inference},  182,  50--60.



\end{thebibliography}
\end{document}